\newtheorem{thm}{Theorem}
\theoremstyle{definition}
\newtheorem{definition}[thm]{Definition}
\newtheorem{lem}[thm]{Lemma}
\theoremstyle{remark}
\theoremstyle{definition}
\newcommand{\beq}{\begin{equation}}
\newcommand{\eeq}{\end{equation}}
\newcommand{\ba}{\begin{eqnarray}}
\newcommand{\ea}{\end{eqnarray}}
\newcommand{\ban}{\begin{eqnarray*}}
\newcommand{\ean}{\end{eqnarray*}}
\newcommand{\bit}{\begin{itemize}}
\newcommand{\eit}{\end{itemize}}
\newcommand{\ben}{\begin{enumerate}}
\newcommand{\een}{\end{enumerate}}
\newcommand{\tr}{\ensuremath\mathrm{Tr}}
\begin{document}

\title{On modifications of Reichenbach's principle of common cause in light of Bell's theorem}
\author[1,2]{Eric G. Cavalcanti}
\affil[1]{\small School of Physics, University of Sydney, NSW 2016, Australia}
\author[2,1]{Raymond Lal}
\affil[2]{Department of Computer Science, University of Oxford, Oxford OX1 3QD, United Kingdom}

\date{June 16, 2014}

\maketitle

\begin{abstract}
Bell's 1964 theorem causes a severe problem for the notion that correlations require explanation, encapsulated in Reichenbach's Principle of Common  Cause. Despite being a hallmark of scientific thought, dropping the principle has been widely regarded as a much less bitter medicine than the perceived alternative---dropping relativistic causality. Recently, however, some authors have proposed that modified forms of Reichenbach's principle could be maintained even with relativistic causality. Here we break down Reichenbach's principle into two independent assumptions---the \emph{Principle of Common Cause} proper, and \emph{factorisation of probabilities}. We show how Bell's theorem can be derived from these two assumptions plus relativistic causality and the Law of Total Probability for actual events, and we review proposals to drop each of these assumptions in light of the theorem. In particular, we show that the \emph{non-commutative common causes} of Hofer-Szab\'{o} and Vecserny\'{e}s fail to have an analogue of the notion that the common causes can explain the observed correlations. Moreover, we show that their definition can be satisfied trivially by any quantum product state for any quantum correlations. We also discuss how the \emph{conditional states} approach of Leifer and Spekkens fares in this regard.
\end{abstract}

\section{Introduction}\label{sec:intro}

Imagine that a rare viral infection breaks out simultaneously in Australia and in Brazil. We would expect one of three possible explanations to account for this coincidence: either the virus was carried by a host travelling from Australia to Brazil, or by a host going from Brazil to Australia, or it was carried by two travellers who were in contact at some third country in the past. Reichenbach's Principle of Common Cause (R-PCC) \cite{Reichenbach1956} attempts to formalise this intuition. It states that when two events are correlated, either one is a direct cause of the other, or they share a common cause. This principle is considered to be fundamental for many areas of scientific practice, but it leads to severe problems when considered in conjunction with relativistic causal structure and the predictions of quantum theory, as shown by Bell's theorem \cite{Bell1964, Wood2012}. Here we will discuss some attempts to retain some form of the principle. We will demonstrate that one of those attempts, that of ``non-commutative common causes'' \cite{Hofer2012, Hofer2013}, fails to provide an explanation of quantum correlations.

Consider the situation where two correlated events $A$ and $B$ are space-like separated, as in the usual Bell scenario. Relativistic causal structure implies that $A$ and $B$ cannot be causally connected, and thus the R-PCC implies they must share a common cause. Bell, like Reichenbach, considered a common cause explanation for these events to mean that there exists a sufficient specification of events (denoted by $\lambda$) in the common past of $A$ and $B$ such that conditioned on $\lambda$ the joint probability of $A$ and $B$ factorises, i.e. $P(A,B|\lambda)=P(A|\lambda)P(B|\lambda)$. This assumption however leads to the Bell inequalities, which are violated by quantum theory and by numerous experiments to date~\cite{Aspect1999,Rowe2001}. Thus relativistic causal structure and the R-PCC in the formulation above cannot coexist.

One alternative is to drop the R-PCC, and reject the idea that all correlations need to be explained in terms of a common cause. This position has been defended, for example, by van Fraassen~\cite{vanFraassen1982}, who called it ``the Charybdis of realism''. Another alternative is to drop relativistic causal structure, as is done in Bohmian mechanics \cite{Bohm1952}. However, this approach leads to a number of other problems, such as explaining why the violation of relativity cannot be observed at the macroscopic level. Recently it has been shown that any causal theory for the Bell correlations that maintains the R-PCC must feature such a ``fine-tuning'' of the causal parameters in the theory \cite{Wood2012}, that is, any such causal model must ``hide'' some causal channels available at the underlying level from use by macroscopic agents in the world. 

On the other hand, an outright rejection of R-PCC seems to be too extreme; the principle has a wide range of applicability and is explicitly assumed in disciplines such as the study of causal discovery algorithms \cite{Pearl2000}. It would therefore be interesting to attempt to modify the R-PCC or delineate its limits of validity. In this vein, some authors have proposed to keep some form of the R-PCC, but modified so that it blocks the derivation of Bell inequalities. Approaches in this line include that of Leifer and Spekkens (L-S) \cite{Leifer2006, Leifer2011} and that of Hofer-Szab\'{o} and Vecserny\'{e}s (H-V) \cite{Hofer2012}. 

In the L-S approach, the relativistic causal structure in the usual Bell scenario implies a factorisation not of probabilities, but of certain mathematical objects called \emph{conditional quantum states}, which are obtained through the Choi-Jamiolkowski isomorphism between quantum channels and quantum states. Wood and Spekkens \cite{Wood2012} have suggested that this approach could lead to a way of maintaining some non-classical form of the R-PCC that could find use in a generalised version of causal discovery algorithms compatible with the quantum probability calculus. The L-S formalism, however, has some caveats \cite{Leifer2011} and does not fully achieve that goal in its current form.

The H-V approach defines a notion of `non-commutative common cause', which requires a factorisation of probabilities similar to the standard Reichenbach-Bell formulation, but blocks the use of the Law of Total Probability that would lead to a derivation of Bell inequalities. Here we will give an analysis of this program and argue that it fails to provide a common cause explanation of the Bell correlations. The reason it fails is that it lacks an essential aspect of the notion of common cause, namely that the common cause can causally produce the observed correlations. Moreover, as we will show, using the H-V definition \emph{any} quantum product state can count as a common cause for \emph{any} quantum correlations whatsoever. There is thus no causal or inferential connection between the purported common causes and their effects, which makes the H-V notion of non-commutative common cause untenable.

\paragraph{Overview of paper}
In Section~\ref{sec:background} we show how Bell's theorem can be derived from the Principle of Common Cause and other assumptions; we then describe the options for responding to this result. In Section~\ref{sec:H-V} we review some basic concepts of algebraic quantum field theory, and the definition of non-commutative common causes given by H-V. In Section~\ref{sec:results} we describe our main results, which show how the definition of non-commutative common causes can be trivially satisfied by any product state for any quantum correlations. Finally, in Section~\ref{sec:discussion} we discuss how other approaches, in particular the Leifer-Spekkens model, may evade these problems  \cite{Leifer2011}.

\section{Common causes and quantum theory}\label{sec:background}

In this Section we shall discuss the relationship between R-PCC and Bell's theorem. In fact, the relationship is rather subtle. Indeed, although Reichenbach both articulated R-PCC and investigated no-go theorems for quantum theory, he did not quite anticipate the connection between them that Bell established. Reichenbach's no-go theorem is outlined in his 1944 book \cite{Reichenbach1944}---some twenty years before Bell's theorem---where he proves a result that he calls the `Principle of Anomaly'. This states that any theory that gives a more complete description of a physical system than that given by the wave function will be subject to `causal anomalies'. Prima facie, this sounds strikingly similar to the statement of Bell's theorem. However, he arrived at this conclusion by considering single systems instead of space-like separated systems, and despite his principle of anomaly, it doesn't seem that Reichenbach realised that quantum theory would come to challenge his principle of common cause. That Reichenbach did not arrive at Bell's theorem even though he considered both common causes and causal anomalies in quantum theory goes to show how subtle and deep was Bell's insight.

To better explore the connection between the Principle of Common Cause, Bell's notion of Local Causality (LC) \cite{Seevinck2010}, and the multiple responses to the dilemma imposed by Bell's theorem, we will use a weaker definition of the R-PCC, that does not imply separability of probabilities, as follows.
\begin{definition}[Principle of Common Cause (PCC)]\label{def:PCC}
If two events $A$ and $B$ are correlated, i.e., if $P(A,B)>P(A)P(B)$, then either:
	\begin{enumerate}[(i)]
		\item$A$ and $B$ are directly causally connected, i.e. either $A$ causes $B$ or $B$ causes $A$, or
		\item$A$ and $B$ share a \emph{common cause} that explains the correlation.
	\end{enumerate}
\end{definition}

\begin{definition}[Factorisation of probabilities (FP)]\label{def:RBCC}
Two events have a \emph{common cause} if and only if there exists a sufficient specification of variables $\lambda$ corresponding to events in the common causal past of $A$ and $B$ such that conditioned on those variables the joint probability of $A$ and $B$ factorises:
\begin{equation}\label{eq:RBCC}
P(A,B|\lambda) = P(A|\lambda)P(B|\lambda) \,.
\end{equation}
\end{definition}
	
The above is also referred to as a \emph{screening-off condition}. The conjunction of \cref{def:PCC} and \cref{def:RBCC} is Reichenbach's Principle of Common Cause (R-PCC) \cite{Reichenbach1956}. What counts as the `common causal past' of the two events in question of course depends on the causal structure assumed in a theory.

\begin{definition}[Relativistic causal structure (RCS)]\label{def:RCS}
The following two conditions are satisfied:
\begin{enumerate}
	\item \textit{Relativistic space-time}: Events can be embedded in a single relativistic space-time;
	\item \textit{Consistency of causal structure}: causality is consistent with background space-time structure  (e.g. within a relativistic space-time, all causes of an event are to be found in the event's past light cone).
\end{enumerate}
\end{definition}

Assuming relativistic causal structure (RCS), then if $A$ and $B$ are space-like separated events, they cannot be directly causally connected, and the PCC implies that they must share a common cause. Reinchenbach's PCC and relativistic causal structure imply Bell's notion of Local Causality (LC)~\cite{Wood2012}, which is contradicted by experiments with entangled quantum systems. However, to obtain that contradiction, it is required that one assumes the validity of the Law of Total Probability for the common cause events, and this is the assumption that is dropped in the H-V common cause formalism. For completeness, we review below the derivation of a Bell inequality, making this assumption explicit.

Suppose that two agents, Alice and Bob, each have access to a device for which Alice can supply an input $x$ and receive an output $a$; likewise Bob inputs $y$ and outputs $b$. In order to derive a contradiction with local causality, it is sufficient to consider the case where $x,y,a,b\in\{0,1\}$. In general, such a device will be probabilistic, in that its operation will be described by a conditional probability distribution $P(a,b\,|\,x,y)$.

Such a device can exhibit correlated outputs, e.g. the device could display perfect correlations, given by $P(a=b\,|\,x,y)=1$ for all $x,y$. Now, PCC leads us to seek an explanation of such correlations in terms of either (i) a direct cause from Alice to Bob (or vice versa); or (ii) a common cause to the past of both Alice and Bob. But suppose that Alice and Bob are space-like separated from one another. Then RCS implies that Alice and Bob cannot influence one another, and option (i) is unavailable. Consider then option (ii). This corresponds to the existence of an event $\lambda$, whose influence on the outputs of both Alice's and Bob's results, {\it ex hypothesi}, from an interaction in their common past (see Figure \ref{fig:lhbox}). 
(The variable $\lambda$ is also called a \emph{latent variable}, in the terminology of Bayesian networks \cite{Wood2012}.)
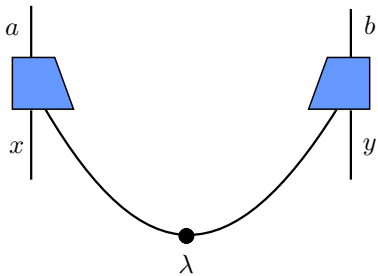
\begin{figure}
\centering
\ifx\JPicScale\undefined\def\JPicScale{1}\fi
\psset{unit=\JPicScale mm}
\psset{linewidth=0.3,dotsep=1,hatchwidth=0.3,hatchsep=1.5,shadowsize=1,dimen=middle}
\psset{dotsize=0.7 2.5,dotscale=1 1,fillcolor=black}
\psset{arrowsize=1 2,arrowlength=1,arrowinset=0.25,tbarsize=0.7 5,bracketlength=0.15,rbracketlength=0.15}
\begin{pspicture}(0,0)(48.12,36.25)
\rput(23.75,1.88){$\lambda$}
\rput(1.25,17.5){$x$}
\rput(48.12,17.5){$y$}
\rput(0.62,33.12){$a$}
\rput(48.12,33.75){$b$}
\newrgbcolor{userFillColour}{0.4 0.6 1}
\pspolygon[linewidth=0.2,fillcolor=userFillColour,fillstyle=solid](0.62,29.38)
(6.25,29.38)
(8.75,22.5)
(1.25,22.5)
(0.62,22.5)(0.62,29.38)
\newrgbcolor{userFillColour}{0.4 0.6 1}
\pscustom[linewidth=0.2,fillcolor=userFillColour,fillstyle=solid]{\psline(42.5,29.38)(48.12,29.38)
\psline(48.12,29.38)(48.12,22.5)
\psline(48.12,22.5)(40,22.5)
\psbezier(40,22.5)(40,22.5)(40,22.5)
\psline(40,22.5)(42.5,29.38)
\closepath}
\psline(3.12,36.25)(3.12,29.38)
\psline(45.62,35.82)(45.62,29.38)
\psline(3.12,22.34)(3.12,13.12)
\psline(45.62,22.34)(45.62,13.12)
\psbezier(5,22.5)(18.12,0.19)(29.75,0.19)(43.75,22.5)
\psdots[linestyle=none,fillstyle=solid,dotsize=0.7 4.5,dotstyle=o](23.75,5.62)
(23.75,5.62)
\end{pspicture} 
\caption{\small Schematic diagram of a Bell-type experiment. Alice and Bob have access to devices with inputs $x$ and $y$, and outputs $a$ and $b$, respectively. Any correlations between the outputs are sought to be explained by a common cause represented by a variable $\lambda$ in the common past of Alice and Bob's devices.}\label{fig:lhbox}
\end{figure}
Using the Reichenbach-Bell notion of common cause, Def.~\ref{def:RBCC}, to this setup means that conditioning on $\lambda$ factorises the correlations:
\beq\label{eq:factor}
P(a,b\,|\,x,y,\lambda)=P(a\,|\,x,\lambda)P(b\,|\,y,\lambda)
\eeq
Now comes the crucial step that is dropped in the H-V approach, which can be stated explicitly as:
\begin{definition}[Law of Total Probability (LTP)]\label{def:LTP}
Given a set of mutually exclusive events $\{x\}$ whose probabilities sum to unity, then the \emph{law of total probability} is satisfied if the probability of an event $y$ can be written as
\begin{equation} \label{eq:LTP}
P(y) = \sum_x P(x) P(y|x) \,.
\end{equation}
\end{definition}

Assuming the LTP, it must be that, when averaging over our ignorance of $\lambda$ using a probability distribution $\mu(\lambda)$, we obtain\footnote{Here another assumption is made, that the distribution of $\lambda$ does not depend on the free variables $x$ and $y$, i.e. $\mu(\lambda|x,y)=\mu(\lambda)$. We will leave this implicit in the main text as it is not important for our main argument.}:
\beq\label{eq:locality}
	P(a,b\,|\,x,y)=\sum_\lambda \mu(\lambda)\,P(a\,|\,x,\lambda)P(b\,|\,y,\lambda)
\eeq

Now, let $C_{xy}:=P(a=b\,|\,x,y)-P(a\neq b\,|\,x,y)$.
Bell showed that a set of probability distributions $P(a,b\,|\,x,y)$ satisfies Eq.~\eqref{eq:locality} (for all values of $a,b,x,y$) only if a certain inequality is satisfied \cite{Bell1964}, here we use the CHSH version \cite{Clauser1969}:
\beq \label{eq:Bell_inequality}
C:=C_{00}+C_{01}+C_{10}-C_{11}\leq 2.
\eeq
In quantum theory, there exist states and measurements which violate Eq.~\eqref{eq:Bell_inequality}; thence the resulting distributions $P(a,b\,|\,x,y)$ cannot be decomposed as in Eq.~\eqref{eq:locality} for any distribution over the variable $\lambda$. Moreover, Alice and Bob can be spacelike-separated when performing such measurements. For example, if Alice and Bob share the Bell state $\ket{\phi}=\frac{1}{\sqrt{2}}(\ket{00}+\ket{11})$, and choose appropriate measurements, then a violation of Eq.~\eqref{eq:Bell_inequality} will occur, with $C=2\sqrt{2}$~\cite{Clauser1969}. Furthermore, violation of Eq.~\eqref{eq:Bell_inequality} (and other variants) has been observed numerous times~\cite{Aspect1999,Rowe2001}. Thus we reach a contradiction, and one or more of our assumptions must be wrong.

In summary, Bell's theorem can be reformulated as stating that 
\begin{equation}
\mathrm{PCC + FP + RCS + LTP} \Longrightarrow \mathrm{Bell}\,\,\,\mathrm{inequalities}.
\end{equation}
Since quantum theory violates the Bell inequalities, this leaves us with the following alternatives: 
\begin{enumerate}[(a)]
	\item Reject the PCC, a position that has been defended by van Fraassen~\cite{vanFraassen1982}, among others; 
	\item Reject RCS, as in Bohmian mechanics \cite{Bohm1952} (which rejects condition (ii) in Definition~\ref{def:RCS});
	\item Reject FP, i.e. the notion that common causes require factorisation of probabilities, as in the approach of Leifer and Spekkens \cite{Leifer2006, Leifer2011}; or
	\item Reject LTP, as in the programme of Hofer-Szab\'{o} and Vecserny\'{e}s \cite{Hofer2012, Hofer2013}.
\end{enumerate}
The idea of dropping the LTP may sound plausible. After all, it is a well known fact that the LTP does not hold in general for counterfactual measurements in quantum theory\footnote{ For example, consider $x$ to be a which-path measurement on a double-slit experiment, and $y$ a measurement of the position a particle hits a screen after the slits. Then Eq.~\eqref{eq:LTP} represents the distribution one obtains when one measures the path the particle takes. This would not be compatible with the predictions for when such a measurement is \emph{not} performed, which would display quantum interference between the paths. Interestingly, a modified version of it holds in relation to a special kind of counterfactual measurement (SIC-POVMs). See \cite{Fuchs2009} for details.}, a fact encapsulated by Peres in the dictum ``unperformed experiments have no results'' \cite{Peres1978}. Since the hidden variables $\lambda$ are not directly measurable, then perhaps the LTP does not need to hold for them. 

On the other hand, the LTP does hold for any \emph{actual} events, such as the outcomes of performed measurements. Dropping the LTP is tantamount to saying that $\lambda$ have not actually occurred, in which case it is dubious whether they are the sorts of things that could count as a common cause for actual events. We will return to this argument later. First we will present what we take to be a much stronger argument against the `non-commmutative common cause' notion of H-V.

\section{Algebraic quantum field theory and non-com-mutative common causes}\label{sec:H-V}

The work by Hofer-Szab\'o and Vecserny\'es is formulated in the setting of algebraic quantum field theory (AQFT). To describe their definition of common cause we will need a few background notions. There are two features of AQFT to note. Firstly, AQFT is intended to be an operational approach to quantum field theory. This means that the primitives of the theory are observables, which are taken to be elements of a $C^*$-algebra.\footnote{For technical reasons, it is usually further assumed that this algebra is a von Neumann algebra. However, the distinction will not play a role in what follows.} Note that no initial reference is made to states or Hilbert spaces, as is consistent with the operational motivation, hence the use of `abstract' algebras of observables. Secondly, and relatedly, AQFT is sometimes referred to as a  `local' approach to quantum field theory. But this, of course, should not be taken to mean `local' in the sense of satisfying Bell-type inequalities. Instead, the AQFT concept of `locality' is formalised by assigning a $C^*$-algebra $\mathcal{A}(O)$ to each bounded region $O$ of a given spacetime, representing the observables that can be measured in $O$.

The basic initial structure of an AQFT is the mapping $O\mapsto\mathcal{A}(O)$, called a \emph{net of algebras}.\footnote{For details of further properties required of this mapping, and other aspects of AQFT, see \cite{Halvorson2006} for a full and accessible account of AQFT.} This mapping is required to satisfy various properties that formalise the motivations given above, and which ensure that the existence of an algebra $\hat{\mathcal{A}}$, known as the \emph{quasi-local algebra}.\footnote{The algebra $\hat{\mathcal{A}}$ is defined to be a direct limit of the net of algebras. See \cite{Rotman2010} for a full definition of this construction.} There exists an embedding $\mathcal{A}(O)\hookrightarrow\hat{\mathcal{A}}$ for every region $O$, and so the quasi-local algebra $\hat{\mathcal{A}}$ should be thought of as a `large' algebra encompassing all the local algebras $\mathcal{A}(O)$ at each region $O$.

To describe empirical data, we must have a notion of the state of a system. A \emph{state} is a positive linear functional $\phi:\hat{\mathcal{A}}\rightarrow\mathbb{C}$ on the quasi-local algebra $\hat{\mathcal{A}}$, meaning that $\phi(A)\in\mathbb{R}^+$ if $A$ is positive (i.e.~if $A=B^*B$ for some $B\in\mathcal{A}$). The quantity $\phi(A)$ yields the expectation value of the observable $A$ in the state $\phi$. The normalisation condition for states is $\phi(\mathbf{1})=1$, where the unit element is $\mathbf{1}\in\hat{\mathcal{A}}$.

We have defined states and observables, and so we can now begin to formulate the common cause condition of Hofer-Szab\'o and Vecserny\'es. The set of projections in the algebra $\mathcal{\hat{A}}$ is denoted $\mathcal{P}(\mathcal{\hat{A}})$. We call a set of mutually orthogonal projections $\{C_k\}_k$ a \emph{partition of unity} if $\sum_k C_k=\mathbf{1}$. This represents the set of mutually exclusive outcomes of a maximal measurement.

Given this framework, the authors of Ref.~\cite{Hofer2013} define a non-classical notion of common cause in the following way.
First, define the \emph{conditional expectation} $E_c$ as
	\begin{equation}
	E_c(A):=\sum_{k \in K} C_k A C_k \,. 
	\end{equation}
Then:
\begin{definition}\label{def:ccsystem}
A partition of the unit $\{C_k\}_{k\in K} \subset \mathcal{P(\hat{A})}$ is said to be the \emph{common cause system} of the commuting events $A, B \in \mathcal{P(\hat{A})}$, which correlate in the state $\phi: \mathcal{\hat{A}}\rightarrow \mathbb{C}$, if for all $k \in K$ such that $\phi(C_k)\neq 0$, the following condition holds:
	\begin{equation}\label{eq:HV-CC}
		\frac{(\phi \circ E_c)(ABC_k)}{\phi(C_k)}=\frac{(\phi \circ E_c)(AC_k)}{\phi(C_k)}\frac{(\phi \circ E_c)(BC_k)}{\phi(C_k)} \,.
	\end{equation}
\end{definition}
To unfold this definition, we use the L\"uders rule for expressing conditional probability in quantum theory \cite{Bub1977}. In AQFT this is given as follows. Let us write $P_\phi(A|C)$ to denote the probability, for a state $\phi$, of observing event $A$ conditioned on having observed event $C$. Then the L\"uders rule is:
\[
P_\phi(A|C):=\frac{\phi(CAC)}{\phi({C})}.
\]
Now, consider Definition~\ref{def:ccsystem}: it is easy to see that $E_c(AC_{k'})=C_{k'} A C_{k'}$, and so Eq.~\eqref{eq:HV-CC} can be rewritten as
	\begin{equation}
	\frac{\phi(C_kABC_k)}{\phi(C_k)}=\frac{\phi(C_kAC_k)}{\phi(C_k)}\frac{\phi(C_kBC_k)}{\phi(C_k)} \,.
	\end{equation}
Using the L\"uders rule, Eq.~\eqref{eq:HV-CC} corresponds to 
\[
P_\phi(A,B|C_k)=P_\phi(A|C_k)P_\phi(B|C_k).
\]
Hence, Eq.~\eqref{eq:HV-CC} indeed can be seen as expressing the screening-off condition for a common cause $C_k$, as in Eq.~\eqref{eq:RBCC}.

AQFT is an interesting approach to axiomatising quantum field theory. But we can analyse the notion of common cause given by Definition~\ref{def:ccsystem} in a simpler way---using ordinary quantum mechanics---as we now show.

\section{Non-commutative common causes are trivial}\label{sec:results}

To illustrate the physical content of Definition~\ref{def:ccsystem}, it will be helpful to assume that the algebra $\hat{\mathcal{A}}$ can be represented as a collection of operators on a Hilbert space. For example, $\hat{\mathcal{A}}$ could be a subset of the set of bounded linear operators $B(\mathcal{H})$ on a Hilbert space $\mathcal{H}$. In that case, a state $\phi$ is given by $\mathrm{Tr}[\rho-]$, where $\rho$ is an arbitrary density matrix. Then Eq.~\eqref{eq:HV-CC} becomes:
\begin{equation}\label{eq:cc_traces}
\frac{\mathrm{Tr}[\rho \, C_kABC_k]}{\mathrm{Tr}[\rho \, C_k]}=\frac{\mathrm{Tr}[\rho \, C_kAC_k]}{\mathrm{Tr}[\rho \, C_k]}\frac{\mathrm{Tr}[\rho \, C_kBC_k]}{\mathrm{Tr}[\rho \, C_k]} \,.
\end{equation}
As before, the term $\mathrm{Tr}[\rho \, C_kABC_k]$ can be recognised as expressing the probability $P_\rho(ABC_k)$ of obtaining a measurement outcome corresponding to $C_k$ followed by a measurement outcome corresponding to $AB$, given that the system was initially prepared in state $\rho$, and similarly for the other terms. The left-hand side thus represents the conditional probability $P_\rho(A,B\,|\,C_k)$ and the equation expresses the condition that the probability of obtaining $AB$ factories given that $C_k$ is observed, following preparation of state $\rho$:
\begin{equation}
P_\rho(A,B\,|\,C_k) = P_\rho(A|\,C_k) P_\rho(B\,|\,C_k)\,.
\end{equation}

We now proceed to show that any product state will satisfy the definition of a common cause system. We first require the following lemma.
\begin{lem}\label{thm:oneC}
Let $\hat{\mathcal{A}}\subseteq B(\mathcal{H})$.
Any projector corresponding to a product state, $C_k=\Pi_k^A \otimes \Pi_k^B$, where $\Pi_k^A$ and $\Pi_k^B$ are projectors onto the Hilbert spaces of subsystems $A$ and $B$, respectively, satisfies Eq.~\eqref{eq:HV-CC}.
\end{lem}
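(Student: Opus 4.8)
The plan is to reduce the trace identity in Eq.~\eqref{eq:cc_traces} to an elementary factorisation of expectation values in the single pure product state onto which $C_k$ projects. Writing $\ket{c_k}=\ket{\psi_k^A}\otimes\ket{\psi_k^B}$ for that product state, so that $C_k=\Pi_k^A\otimes\Pi_k^B=\ketbra{c_k}{c_k}$ with $\Pi_k^A=\ketbra{\psi_k^A}{\psi_k^A}$ and $\Pi_k^B=\ketbra{\psi_k^B}{\psi_k^B}$ rank-one, and taking the correlated events to act on the complementary factors as $A=A'\otimes\id$ and $B=\id\otimes B'$ (Alice's observable and Bob's observable), the whole statement follows once the $\rho$-dependence is shown to drop out.

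First I would record the key algebraic fact that a rank-one projector extracts an expectation value: for any operator $X$,
\begin{equation}
C_k X C_k=\bra{c_k}X\ket{c_k}\,C_k.
\end{equation}
Applying this with $X=A$, $X=B$ and $X=AB$ turns each of the three sandwiched operators appearing in Eq.~\eqref{eq:cc_traces} into a scalar multiple of $C_k$, so that $\mathrm{Tr}[\rho\,C_kXC_k]=\bra{c_k}X\ket{c_k}\,\mathrm{Tr}[\rho\,C_k]$ for each choice of $X$. Since the condition is imposed only when $\phi(C_k)=\mathrm{Tr}[\rho\,C_k]\neq 0$, I can cancel the common factors of $\mathrm{Tr}[\rho\,C_k]$, whereupon Eq.~\eqref{eq:cc_traces} collapses to the single requirement
\begin{equation}
\bra{c_k}AB\ket{c_k}=\bra{c_k}A\ket{c_k}\,\bra{c_k}B\ket{c_k}.
\end{equation}

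The remaining step is to verify this factorisation, which is where the product structure of both $C_k$ and the events is used: because $\ket{c_k}$ is a product state and $A$, $B$ act on complementary subsystems, $\bra{c_k}AB\ket{c_k}=\bra{\psi_k^A}A'\ket{\psi_k^A}\,\bra{\psi_k^B}B'\ket{\psi_k^B}$, while the right-hand side equals $\bra{\psi_k^A}A'\ket{\psi_k^A}\,\bra{\psi_k^B}B'\ket{\psi_k^B}$ as well, so equality holds. I expect the only real subtlety---and the conceptual heart of the result---to be the observation that this works for an \emph{arbitrary} $\rho$: conditioning on a rank-one product projector replaces the (possibly correlated) state $\rho$ by the uncorrelated pure state $\ket{c_k}$, so every trace against $\rho$ reduces to an expectation value in $\ket{c_k}$ and all correlation content of $\rho$ is washed out. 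It is worth flagging that the factorisation genuinely needs $A$ and $B$ to be supported on complementary tensor factors rather than merely to commute; for commuting events that do not factor across the bipartition into subsystems $A$ and $B$ the identity $\bra{c_k}AB\ket{c_k}=\bra{c_k}A\ket{c_k}\bra{c_k}B\ket{c_k}$ can fail, so this tensor-product hypothesis is the one place the argument must be invoked with care.
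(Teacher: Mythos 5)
Your proposal is correct and takes essentially the same approach as the paper: both arguments show that conditioning on the rank-one product projector $C_k$ turns every trace in Eq.~\eqref{eq:cc_traces} into $\bra{c_k}X\ket{c_k}\,\mathrm{Tr}[\rho\,C_k]$, so the $\rho$-dependence cancels and the condition collapses to the factorisation of expectation values in the product state $\ket{c_k}$, which holds because $A$ and $B$ act on complementary tensor factors. The differences are purely presentational: you use the operator identity $C_k X C_k=\bra{c_k}X\ket{c_k}\,C_k$ where the paper expands matrix elements in a basis adapted to $\Pi_k^A$ and $\Pi_k^B$, and you make explicit two assumptions the paper leaves implicit, namely that $C_k$ is rank one and that $A$, $B$ are supported on complementary factors rather than merely commuting.
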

\begin{proof} 
As we have shown above, Eq.~\eqref{eq:HV-CC} can be rewritten as Eq.~\eqref{eq:cc_traces}. Substituting $C_k=\Pi_k^A \otimes \Pi_k^B$ in $\mathrm{Tr}[\rho \, C_kABC_k]$, we obtain
\begin{equation}\label{eq:defTr}
\mathrm{Tr}[\rho \, C_kABC_k] = \sum_{i,j,i',j'} \rho^{i'j'}_{ij} \langle i'|\Pi_k^A A \Pi_k^A|i\rangle \langle j'|\Pi_k^B B \Pi_k^B|j\rangle\,,
\end{equation}
where $\rho^{i'j'}_{ij}=\langle i|_A \langle j|_B \rho |i'\rangle_A |j'\rangle_B$ are matrix elements of $\rho$ relative to orthonormal bases $\{|i\rangle_A\}_i$ and $\{|j\rangle_B\}_j$. 
If we choose these bases so that $\Pi_k^A=|0\rangle_A\langle 0|$ and $\Pi_k^B=|0\rangle_B \langle 0|$, then Eq.~\ref{eq:defTr} simplifies to
\begin{equation}\label{eq:TrABCk}
\mathrm{Tr}[\rho \, C_kABC_k] = \rho^{00}_{00} \langle 0| A |0\rangle \langle 0|B |0\rangle\,.
\end{equation}
Similar reasoning shows that
\begin{eqnarray}\label{eq:TrACk}
\mathrm{Tr}[\rho \, C_kAC_k] &=& \rho^{00}_{00} \langle 0| A |0\rangle\,,\\
\mathrm{Tr}[\rho \, C_kBC_k] &=& \rho^{00}_{00} \langle 0| B |0\rangle\,,\label{eq:TrBCk}\\
 &\mathrm{and}& \nonumber\\
\mathrm{Tr}[\rho C_k]&=&\rho^{00}_{00}.\label{eq:denom}
\end{eqnarray}
Using Eqs.~\eqref{eq:TrABCk}--\eqref{eq:denom},
we see that Eq.~\eqref{eq:cc_traces} is satisfied.
\end{proof}

Using Lemma~\ref{thm:oneC}, we obtain the following.
\begin{thm}\label{thm:mainthm}
Let $\hat{\mathcal{A}}\subseteq B(\mathcal{H})$.
Any orthonormal set of product states for subsystems $A$ and $B$ forms a common cause system for the correlations of any quantum state.
\end{thm}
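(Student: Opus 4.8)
The plan is to reduce \cref{thm:mainthm} directly to \cref{thm:oneC}, exploiting the fact that the common cause condition of \cref{def:ccsystem}, namely \cref{eq:HV-CC}, is imposed on each element $C_k$ of the partition \emph{separately}: for a fixed state it relates $\rho$ to a single projector $C_k$ at a time and makes no reference to the other members of the partition. Thus, verifying that an orthonormal set of product states $\{C_k\}$ is a common cause system splits into two independent checks: (i) that the set is a partition of unity, and (ii) that each individual $C_k$ satisfies the screening-off condition.

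For (i), I would take the set to consist of mutually orthogonal rank-one product projectors $C_k = \Pi_k^A \otimes \Pi_k^B$ forming a complete product basis, so that $\sum_k C_k = \mathbf{1}$; this is exactly the partition-of-unity requirement of \cref{def:ccsystem}. Such a basis always exists (for instance $\{|i\rangle_A \otimes |j\rangle_B\}_{i,j}$), and any orthonormal family of product states can be extended to one.

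For (ii), I would observe that each $C_k$ is, by construction, a projector onto a product state, so \cref{thm:oneC} applies to it verbatim and guarantees that \cref{eq:HV-CC} holds for that $C_k$ and the given state. The only point requiring care is that the adapted bases used in the proof of the lemma---where one sets $\Pi_k^A = |0\rangle_A\langle 0|$ and $\Pi_k^B = |0\rangle_B\langle 0|$---are chosen separately for each $k$; this is legitimate precisely because \cref{eq:HV-CC} decouples across $k$, so no single global basis need simultaneously diagonalise all of the $C_k$.

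Combining (i) and (ii) yields the theorem at once: the set is a partition of unity and every term screens off $A$ from $B$ in the given state, whatever correlations that state displays. I expect the only genuine subtlety to be the partition-of-unity bookkeeping in (i), since ``orthonormal set of product states'' must be read as (or completed to) a full product basis for $\sum_k C_k = \mathbf{1}$ to hold; the screening-off half is inherited wholesale from \cref{thm:oneC} and introduces no new computation. That the argument goes through for an \emph{arbitrary} state $\rho$ is exactly what makes the H-V notion trivial, which is the punchline the theorem is meant to deliver.
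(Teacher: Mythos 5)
Your proposal is correct and takes essentially the same route as the paper: check the partition-of-unity requirement, then apply \cref{thm:oneC} to each $C_k$ separately with a basis adapted to that $k$, which is legitimate exactly because \cref{eq:HV-CC} decouples across the members of the partition. One caveat: your parenthetical claim that any orthonormal family of product states ``can be extended to'' a complete product basis is false in general (unextendible product bases exist), but this is inessential here, since a common cause system is by \cref{def:ccsystem} already a partition of unity, so the theorem only ever concerns complete sets. In fact your double-indexed reading $\{|i\rangle_A \otimes |j\rangle_B\}_{i,j}$ handles the bookkeeping more carefully than the paper's own proof, whose diagonal pairing $C_k = \Pi^A_k \otimes \Pi^B_k$ does not literally satisfy $\sum_k C_k = \mathbf{1}$ unless $k$ is read as a multi-index.
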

\begin{proof} 
Let $\{\Pi^A_k\}_k$ and $\{\Pi^B_k\}_k$ be arbitrary orthonormal bases of projectors for subsystems $A$ and $B$ respectively. The set of product states $\{C_k\}_k$, where $C_k:=\Pi^A_k\otimes \Pi^B_k$ satisfies $\sum_k C_k = \mathbb{I}$, and is thus a partition of the unit as required by Definition~\eqref{def:ccsystem}. We can then apply Lemma~\ref{thm:oneC} to $C_k$ for each $k$, by setting $\Pi^A_k=|k\rangle_A\langle k|$ and $\Pi^B_k=|k\rangle_B\langle k|$. Hence Eq.~\eqref{eq:HV-CC} is satisfied by $C_k$ for each $k$. Moreover, since the state $\rho$ in Lemma~\ref{thm:oneC} is arbitrary, $\{C_k\}_k$ is a common cause system for the correlations of any state $\rho$.
\end{proof}

Hence we see that any product state qualifies as a potential `non-commutative common cause' under the proposed definition. 
So, for example, suppose that Alice and Bob share a Bell state and perform measurements that lead to maximal violation of a Bell-type inequality. 
Then we are invited to model the common cause of these correlations as a product state in the common past of Alice and Bob---and \em any \em product state will do.
In addition, as Theorem~\ref{thm:mainthm} makes clear, this notion of common cause is doubly `degenerate'. For not only will any product state serve as a common cause, but it will do so for \emph{any} observed quantum correlations whatsoever.

 Importantly, since any product state can serve as a `non-commutative common cause' for any correlations, the actual observed correlations cannot in general be described as an expectation value over a distribution of possible common causes, using the Law of Total Probability, Eq.~\eqref{eq:LTP}, i.e.~we have:
\[
P_\rho(A,B)\neq \sum_k P_\rho(A,B|C_k)P_\rho(C_k).
\]
for any non-commutative common cause system $\{C_k\}_k$.
Indeed, take the $C_k$'s to be any set of product states. Then averaging over a distribution over those states, we obtain Eq.~\eqref{eq:locality}, and thus those correlations must necessarily satisfy all Bell inequalities. To maintain consistency with the predictions of quantum theory for a state that does violate a Bell inequality, the H-V approach must drop the LTP (as is explicitly acknowledged in \cite{Hofer2012,Hofer2013}).

\section{Discussion}\label{sec:discussion}

Now that we have an understanding of the H-V approach, it is instructive to return to the options in response to Bell's theorem that we have presented in Section~\ref{sec:background}.

We started out with a desire to maintain some version of the Principle of Common Cause in light of its apparent contradiction with quantum theory, as shown by Bell's theorem. 
We broke down the relevant assumptions that go into Bell's theorem into four parts, given by Defs.~\ref{def:PCC}-\ref{def:LTP}. These correspond to the following assumptions: (a) the Principle of Common Cause; (b) Factorisation of Probabilities; (c) Relativistic Causal Structure, and (d) the Law of Total Probability. We summarise the options available in Table~\ref{tbl:Bell_responses}.
\begin{table}[h!]
\caption{Responses to Bell's theorem, categorized according to assumptions satisfied}\label{tbl:Bell_responses} 
\centering 
\begin{tabular}{c | c c c c} 
\hline\hline 
Approach & PCC & FP & RCS & LTP \\ [0.5ex] 
\hline  
H-V & $\checkmark$ & $\checkmark$ & $\checkmark$ & $\times$ \\
L-S & $\checkmark$ & $\times$ & $\checkmark$ & $\checkmark$ \\ [1ex]
van Fraassen & $\times$ & $\times$ & $\checkmark$ & $\checkmark$ \\ 
Bohm & $\checkmark$ & $\checkmark$ & $\times$ & $\checkmark$ \\
\hline 
\end{tabular} 
\label{table:nonlin} 
\end{table} 

Let us take stock of the options displayed in Table~\ref{tbl:Bell_responses}.

\subsection*{The H-V response} 
We have seen that the approach of Refs.~\cite{Hofer2012,Hofer2013} avoids the contradiction by dropping the Law of Total Probability. However, we have shown that this leaves them with a notion of common cause that is too weak to satisfy an important purpose of Reichenbach's Principle of Common Cause itself, namely that the common cause is supposed to \emph{explain} the correlations. For in their definition, \emph{any} product state counts as a common cause for \emph{any} correlations. There is thus no deductive or causal link between the common cause and the effects they are purported to explain.

Let us be more specific about how this is manifested in the H-V approach, which is in two ways. Firstly, the H-V definition lacks one of the two crucial conceptual parts of the definition of a common cause. The definition of common cause systems given in Definition~\ref{def:ccsystem} provides a notion of conditional independence---an analogue of the factorisability condition for the commutative case---but it has no analogue of the assumption that the common causes can explain the correlations by averaging over the values of the hidden variables. In other words, it provides an analogue of Eq.~\eqref{eq:factor}, but not of Eq.~\eqref{eq:locality}. Now, even though it is true that LTP does not hold for counterfactuals in quantum theory (`unperformed experiments have no results'), it does hold for actual events. A cause must be an actual event, thus the LTP should presumably hold for them. 

Secondly, even granting that the LTP may be dropped, the notion of non commutative common cause is trivial in the sense that any product state can be posited as a common cause for any correlations. But no product state can reproduce the correlations that violate a Bell inequality. Their condition can thus be better interpreted as picking out states such that {\it if} we were to perform measurements over them, {\it then} they would render the variables independent, but this counterfactual does not in any way explain the correlations that we do observe when we have an entangled state. To explain the correlations of an entangled state by pointing out that a product state would factorise the correlations is analogous to trying to explain the simultaneous occurrence of a rare disease in opposite parts of the world by pointing out that the correlation \emph{could have not occurred}.

 Indeed this means that Eq.~\eqref{eq:LTP} \emph{cannot} be substituted by any function in the HV approach: the observed correlations are not a function of their common causes if all product states are common causes for all possible correlations. The correlations cannot be deduced, even probabilistically, from their causes. 

\subsection*{The L-S response}
We can compare this situation with the formalism of Leifer and Spekkens (L-S)~\cite{Leifer2011}. The authors of \cite{Wood2012} conjecture that one could retain a form of the PCC through a theory of causal networks such as that recently developed by the authors of~\cite{Leifer2011}, based on a notion of quantum conditional states proposed by Leifer~\cite{Leifer2006}. Their idea is to maintain the implications of the underlying causal structure to conditional independences, but replacing the classical conditional probabilities by quantum conditional states, i.e.~with a non-commutative generalisation of classical probability.

This therefore looks very similar to the H-V approach, so let us consider the L-S approach in more detail. L-S seek to define quantum theory as a \emph{formal analogue} of classical probability theory. For example, the standard density operator $\rho_A$ on a Hilbert space $\mathcal{H}_A$ is taken to play the role of a classical probability distribution $P(A)$.
A joint probability distribution and a marginal probability distribution are respectively replaced by a trace-one operator\footnote{
	In the case that systems $A$ and $B$ represent \emph{acausally} related systems (e.g.~spacelike separated), then $\sigma_{AB}$ is a bona fide density operator, i.e.~it is a \emph{positive} trace-one operator.
	However, in order to define quantum states for the case where $A$ and $B$ are causally related systems, L-S use the more general class of operators, which need not be positive.	
}
$\sigma_{AB}$ on a Hilbert space $\mathcal{H}_{AB}$, and 
the partial trace $\rho_A=\tr_B{[\sigma_{AB}]}$.
So far this is just a linguistic change to the usual formalism of quantum theory.
But the crucial new mathematical ingredient is given by a \emph{conditional density operator} $\sigma_{B|A}$ which is the analogue of a conditional probability distribution $P(B|A)$.
In particular, a conditional density operator satisfies
\beq\label{eq:cdo}
\sigma_{B|A}=\sigma_{AB}\star\rho_A^{-1}, 
\eeq
where the operation denoted by $\star$ is meant to provide an analogue of multiplying classical probability distributions (see \cite[p.~3]{Leifer2011} for an explicit definition of this operation).
Hence \cref{eq:cdo} is formally analogous to the definition of conditional probability, i.e.~$P(B|A)=P(A,B)/P(A)$.

In this way, L-S build up an entire theory of quantum systems that exactly mimics classical probability theory, including a quantum version of Bayes' theorem.
What, then, takes the role of LTP?
This corresponds to what L-S call \emph{belief propagation}: given a state $\rho_A$ for system $A$, and a conditional density operator $\sigma_{B|A}$, we can calculate the state for system $B$ as
\beq\label{eq:belief_prop}
\rho_B = \tr_A(\sigma_{B|A}\rho_A).
\eeq
Using this equation, the L-S approach can provide its own version of a common cause, as follows.
First, note that the formalism can also represent classical probability theory, since this can embed into quantum theory using density operators that are diagonal with respect to a basis representing a classical random variable $X$.
Using this idea, L-S define conditional density operators which can have a classical variable as output, denoted by $\rho_{X|A}$.	
Then L-S show that the Born rule is just an application of \cref{eq:belief_prop}.
That is, if we use standard quantum theory, then for a POVM $\{E^A_y\}_y$ and a density matrix $\rho_A$, we would write the Born rule as $P(Y=y)=\tr_A(E^A_y\rho_A)$. 
L-S show that this can instead be written as $\rho_Y=\tr_A(\sigma_{Y|A}\rho_A)$, for some conditional density operator $\sigma_{B|A}$, and where $\rho_Y$ encodes a classical probability in the manner just described.

Moreover, this then leads to the L-S version of a common cause, and to an important difference between the H-V and L-S approaches: the latter can correctly reproduce the correlations of states that violate a Bell inequality. 
More specifically, in the L-S formalism, the analogue of the locality condition \cref{eq:locality} is:
\beq
\rho_{XY}=\tr_{AB}\left((\sigma_{X|A}\sigma_{Y|B})\rho_{AB}\right)
\eeq
Hence the formal analogy between the L-S formalism and classical probability theory leads to the notion that $\rho_{AB}$ is a `common cause' for the correlations.
However, to reproduce the actual probabilistic correlations, L-S have replaced factorisation of probabilities (FP) with just the fact that the \emph{quantum channels} of the subsystems undergoing the Bell test factorise.
This type of factorisation leaves the L-S approach with room to use the initial (entangled) quantum state of the joint system to obtain the observable probabilities---but this comes at the cost of denying FP.

A similar way of dropping FP while keeping PCC would be to point out that correlations do not need to be explained in terms of a factorisability condition, but that the quantum state of the joint system in its causal past can \emph{itself} be considered as the common cause of the correlations. An objection to this point of view, however, is that the precise correlations cannot be determined without knowledge of the measurements to be performed (the inputs $x$ and $y$ in Fig.~\ref{fig:lhbox}), and these may be determined by factors not in the common past of the correlated events. A similar criticism may be made of the L-S approach. However, an advantage of the latter is that it does give an \emph{analogue} of the factorisation condition (rather than simply dropping it), and thus could allow for a generalisation of Reichenbach's Principle of Common Cause in understanding the implication of causal structure for probabilistic correlations, and be of potential application in areas such as causal discovery algorithms.

\section{Conclusion}

Our results emphasise the fact that the Principle of Common Cause plays a subtle and yet central role in the derivation of Bell's theorem. The subtlety arises from that fact that, a priori, one need not deny PCC in response to Bell's theorem. Yet, despite this option, we have seen that retaining PCC faces difficulties in practice.

Let us re-trace our path. We began by providing a taxonomy of the possible responses to Bell's theorem according to their relationship to PCC. We then showed that one particular response---that of Hofer-Szab\'{o} and Vecserny\'{e}s---faces a severe problem. In particular, our main result, Theorem~\ref{thm:mainthm}, showed that the H-V approach has a notion of common cause which is trivial. This triviality is two-fold: firstly, \emph{any} product state is a non-commutative common cause for some given set of correlations; secondly, any product state is a such a cause for \emph{any} correlations obtained by measurements on a bipartite quantum state. Furthermore, the notion that the common causes can in any way causally reproduce or explain the correlations is lost in this approach.

We then discussed how the H-V approach fares against other approaches, in particular the Leifer-Spekkens approach, which can reproduce the correlations, but at the cost of replacing the assumption of factorisation of probabilities by a different factorisation assumption, of conditional quantum states.

Our work suggests a few directions for future research, and we can mention two clusters of questions. Firstly, it would be interesting to understand the role of PCC and causality in other aspects of quantum theory. This could consist of providing a similar analysis for no-go theorems other than Bell's theorem. For example, it would be interesting to establish whether notions of causality can play a similar role in the derivation of the Kochen-Specker no-go theorem \cite{Kochen1967} for non-contextual hidden variable theories. 
Secondly, it would be interesting to understand whether there is a formal relationship between the H-V approach and the L-S approach. This leads to the more ambitious question of whether the L-S approach (or some variant thereof) can fully replace the role of Reichenbach's principle and provide a satisfactory causal explanation of quantum correlations.

\paragraph{Acknowledgements.}
We thank an anonymous referee for helpful feedback on an earlier version of this paper.
We thank Jon Barrett and Gabor Hofer-Szab\'{o} for useful discussions during the development of this work. EGC received support from an Australian Research Council grant DE120100559. RL received support from the Sydney Centre for Foundations of Science, and the John Templeton Foundation.

\bibliographystyle{plain}

\end{document}